\newtheorem{theo}{Theorem}[section]
\newtheorem{prop}{Proposition}[section]
\newtheorem{defi}{Definition}[section]
\newcommand{\mathcmd}[1]{\ensuremath{#1}}
\providecommand{\Ver}{\mathcmd{\mathsf{Ver}}}
\providecommand{\V}{\mathcmd{\mathsf{V}}}
\providecommand{\Setup}{\mathcmd{\mathsf{Setup}}}
\providecommand{\Com}{\mathcmd{\mathsf{Com}}}
\begin{document}

\title{Commitment Schemes from OWFs with Applications to Quantum Oblivious Transfer}

\author{Thomas Lorünser
\orcidID{0000-0002-1829-4882} \and Sebastian Ramacher\orcidID{0000-0003-1957-3725} \and
Federico Valbusa\orcidID{0009-0006-2065-2673}}

\titlerunning{Commitment Schemes from OWFs with Applications to qOT}
\authorrunning{T. Lorünser, S. Ramacher, and F. Valbusa}

\institute{AIT Austrian Institute of Technology
Vienna, Austria \\
\email{\{thomas.lorünser,sebastian.ramacher,federico.valbusa\}@ait.ac.at}
}

\maketitle

\begin{abstract}
Commitment schemes are essential to many cryptographic protocols and schemes with applications that include privacy-preserving computation on data, privacy-preserving authentication, and, in particular, oblivious transfer protocols. For quantum oblivious transfer (qOT) protocols, unconditionally binding commitment schemes that do not rely on hardness assumptions from structured mathematical problems are required. These additional constraints severely limit the choice of commitment schemes to random oracle-based constructions or Naor's bit commitment scheme. As these protocols commit to individual bits, the use of such commitment schemes comes at a high bandwidth and computational cost.

In this work, we investigate improvements to the efficiency of commitment schemes used in qOT protocols and propose an extension of Naor's commitment scheme requiring the existence of one-way functions (OWF) to reduce communication complexity for 2-bit strings. Additionally, we provide an interactive string commitment scheme with preprocessing to enable a fast and efficient computation of commitments. 

\keywords{oblivious transfer \and quantum cryptography \and commitment scheme}
\end{abstract}

\section{Introduction}
In the field of cryptography, the concept of secure communication is fundamental, particularly in scenarios involving multiple parties, where some could behave maliciously. Oblivious transfer (OT)~\cite{STOC:NaoPin99,AC:ABBCP13} results as a fundamental cryptographic primitive that enables two parties, often referred to as the sender and the receiver, to exchange information preserving the privacy of the receiver's choices while ensuring that the sender's data remain hidden except for the selected datum. In the simplest case, for example, the receiver obtains one element from a server with two elements. Protocols can be extended to 1-out-of-$N$ with a set of $N$ elements~\cite{STOC:NaoPin99}. OT is used mainly as a subprimitive for other protocols, such as multiparty computation (MPC)~\cite{DBLP:journals/cacm/Lindell21}, private information retrieval protocols, and simultaneous contract signing protocols~\cite{10.1007/1-4020-8143-X_29}. 

Despite its theoretical significance and practical applications, achieving efficient and provably secure OT protocols often requires striking a delicate balance between computational complexity, communication overhead (e.g., as in~\cite{CCS:KKRT16}), and resilience against various adversarial models, including active attacks and side-channel threats. 

\subsection{Related Work}

In a classical environment, it has been proved that OT implies public-key encryption~\cite{FOCS:GKMRV00}. 
This means that the security of these protocols is based on quantum weak or unproven assumptions. 
Quantum cryptography offers the possibility to build OT protocols whose security is based solely on symmetric key primitives \cite{santos_quantum_2022}.
Specifically, unconditionally binding commitment schemes are essential \cite{lemus2025performancepracticalquantumoblivious} for a practical version of the most relevant quantum oblivious transfer protocol known in the literature \cite{10.1007/3-540-46766-1_29}, even if simulation-based security is not a must \cite{scott_quantum_2002}.

Despite the strong security assumption behind qOT, execution is very slow, also due to the huge amount of data to be exchanged during the post-processing phase required to deal with imperfect quantum channels \cite{santos_private_2022}.
In our experiments, the commitment phase appeared as one substantial bottleneck in qOT execution.
Indeed, a way to commit to many bits (states) and open the commitment of only some of them is needed, ideally with an unconditionally binding commitment scheme whose security is based only on one-way functions.

\subsection{Contribution and Structure}
In this work, we propose a new unconditional binding commitment scheme based only on the existence of OWF, which is communication-wise lighter and still efficient in the case of 2-bit strings. We also present a string commitment of the same type for arbitrarily long strings, whose complexity can be analyzed more easily than the one from Naor. We also propose a generic method to shift the expensive operations of bit commitment schemes in an interactive setting to a preprocessing phase. This method is of particular interest for interactive protocols such as quantum oblivious transfer.

The paper is structured as follows: we begin by outlining the quantum OT protocol that motivates our need for a commitment scheme. Next, we introduce the necessary preliminaries, including the definition and key characteristics of commitment schemes. We then detail Naor's commitment scheme, which forms the foundation for our proposed scheme. Following this, we present a novel, unconditionally binding commitment scheme based on OWF, including an ad-hoc construction for 2-bit strings and an optimization technique to enhance efficiency within the quantum OT context. Finally, we summarize our findings and highlight the achieved improvements.

\subsection{Quantum Oblivious Transfer Protocol}\label{ssec:qot}

We now recall the Quantum Oblivious Transfer Protocol by Bennet et al. \cite{10.1007/3-540-46766-1_29}. The flow of the protocol is shown in \Cref{QOT_prot}.
    \begin{figure}[!h]
        \centering
        \resizebox{\columnwidth}{!}{
        \renewcommand\arraystretch{1.2}
        \begin{tabular}{ l }
            \hspace{5.5 cm} \Large \textbf{$\Pi^{\text{BBCS}}$ protocol}\\   
            \textbf{Parameters:} $n$, security parameter; $\mathcal{F}$ two-universal family of hash functions. \\
            \textbf{S input:} $(m_0,m_1) \in \{ 0,1 \}^l$ (two messages). \\
            \textbf{R input:} $b \in \{0,1\}$ (bit choice). \\
            \textit{BB84 phase:} \\
            \ \ \ 1. S generates random bits $\mathbf{x}^S \xleftarrow{\$} \{0,1\}^n$ and random bases $\mathbf{\theta}^S \xleftarrow{\$} \{+,\times\}^n$. Sends the state \\
            \ \ \ \ \ \  $\ket{\mathbf{x}^S}_{\mathbf{\theta}^S}$ to R. \\
            \ \ \ 2. R randomly chooses bases $\mathbf{\theta}^R \xleftarrow{\$} \{+,\times\}^n$ to measure the received qubits. We denote by $\mathbf{x}^R$  \\
            \ \ \ \ \ \ \ his outputs bits.\\
            \textit{Oblivious key phase}: \\
            \ \ \ 3. S reveals to R the bases $\mathbf{\theta}^S$ used during the \textit{BB84 phase} and sets his oblivious key to \\
            \ \ \ \ \ \ \ $\mathbf{ok}^S  \coloneq \mathbf{x}^S$. \\
            \ \ \ 4. R computes $\mathbf{e}^R = \mathbf{\theta}^R \oplus \mathbf{\theta}^S$ and sets $\mathbf{ok}^R \coloneq \mathbf{x}^R$. \\
            \textit{Transfer phase}: \\
            \ \ \ 5. R defines $I_0 = \{ i: e_i^R=0$\} and $I_1 = \{ i : e_i^R=1$\} and sends $I_b$ to S. \\
            \ \ \ 6. S pick two uniformly random hash functions $f_0, f_1 \in \mathcal{F}$, computes the pair of strings $(s_0, s_1)$ \\
            \ \ \ \ \ \ \ as $s_i = m_i \oplus f_i(ok_{I_{b \oplus i}})$ and sends the pairs $(f_0, f_1)$ and $(s_0, s_1)$ to R. \\
            \ \ \  7. R computes $m_b = s_b \oplus f_b(ok_{I_0}^R)$.\\
            S \textbf{outputs:} $\perp$. \\
            R \textbf{outputs:} $m_b$. \\
        \end{tabular}
        }
        \caption{BBCS QOT protocol.}
        \label{QOT_prot}
    \end{figure}
This protocol requires a natural number $n$ and a two-universal family of hash functions $\mathcal{F}$ as security parameters.
The sender S inputs $(m_0,m_1) \in \{ 0,1 \}^l$, and the receiver R inputs $b \in \{ 0,1 \}$, which represent the two messages and the bit choice of the OT protocol, respectively.

S starts by choosing some basis and sending some qubits R measures with a new random basis. This phase is called the \textit{BB84 phase} since it is the same as the first phase in the quantum key distribution protocol \textit{BB84}.

After this, S reveals his basis and sets his oblivious key $\mathbf{x}^S$ as the set of qubits sent at the very beginning of the protocol. Upon reception of the sender's basis, R can compare the two bases and set his oblivious key as the measured qubits. During this phase, the parties have defined two oblivious keys (one per party), so it is called \textit{Oblivious Key Phase}.

Then, the receiver defines $I_0 = \{ i: {e_i}^R = 0 \}$, the set of indices such that the basis of R and S are the same and $I_1 = I_0^{c}$, the set of indices such that the basis of R and S are different.
R can then decide a bit $b$, which encodes the choice bits of the quantum oblivious transfer protocol, and sends $I_b$ to S. S chooses two random hash functions $f_0, f_1 \in \mathcal{F}$, computes the pair of strings $(s_0, s_1)$ as $s_i = m_i \oplus f_i(\mathbf{ok}^S_{I_{b \oplus i}})$ and sends ($f_0,f_1)$ and $(s_0, s_1)$ to R.
In the end, R is able to compute $m_b = s_b \oplus f_b(\mathbf{ok}^R_{I_0})$. This is possible since $\mathbf{ok}^R_{I_0} = \mathbf{ok}^S_{I_0}$, thanks to the definition of $I_0$. Moreover, this is not the case for $I_1$, so there is no way for S to compute $I_1$. This last phase is called \textit{Transfer Phase}.

\subsubsection{Security}
The security of this protocol is unconditional against a dishonest sender. Indeed, during the protocol, S does not receive any information from R other than $I_b$, which, in the view of S is simply a set of indices where either the basis is the same or different.

However, it is not the same against a dishonest receiver. R can learn both $m_0$ and $m_1$ by delaying his measurements (which should happen in step 2), after step 3. This is called \textit{memory attack} since it requires some sort of memory to hold the received qubits (quantum memory) until step 3.
By delaying the measurements after step 3, R can learn and use the same bases as S to measure the signal and gain knowledge of all the states. This implies that R will know both the input messages of the quantum oblivious transfer protocol.

This attack was discovered directly by the authors of the protocol. They also suggested a way to prevent it: the sender needs to make sure that the receiver measures the received qubits immediately. This can be achieved by requiring R to commit to all measurements (bases and measured states). In particular, each measurement is encoded with a couple of bits, so R has to commit to 2-bit strings. Then, \textit{S} can open a random subset of the values (measurements) to verify R behaved honestly. This brings a good level of security, indeed the possibility that R passes this control is negligible in the size of the set of values S opens.
In \Cref{ModQOT_prot} we can observe the QOT enriched with the bit commitment phase we described above.

\begin{figure}[!h]
        \centering
        \resizebox{\columnwidth}{!}{
        \renewcommand\arraystretch{1.2}
        \begin{tabular}{ l }
            \hspace{5.5 cm} \Large \textbf{$\Pi^{\text{BBCS}}_{\mathcal{F}_{\text{com}}}$ protocol}\\   
            \textbf{Parameters, S input and R input:} Same as in \textbf{$\Pi^{\text{BBCS}}$} protocol. \\
            \textit{BB84 phase:} Same as in \textbf{$\Pi^{\text{BBCS}}$} protocol.\\
            \textit{Cut and choose phase:} \\
            \ \ \ 3. R commits to the bases used and the measured bits, i.e. $\mathbf{com(\theta^R,x^R)}$, and sends to S. \\
            \ \ \ 4. S asks R to open a subset $T$ of commitments (e.g. $n/2$ elements) and receives $\{\theta_i^R,x_i^R \}_{i \in T}$. \\
            \ \ \ 5. In case any opening is not correct or $x_i^R \ne x_i^S$ for $\theta_i^S = \theta_i^S$, abort. Otherwise, proceed. \\
            \textit{Oblivious key phase}: Same as in \textbf{$\Pi^{\text{BBCS}}$} protocol.\\
            \textit{Transfer phase}: Same as in \textbf{$\Pi^{\text{BBCS}}$} protocol.\\
            S \textbf{output} \textbf{and} R \textbf{output:} Same as in \textbf{$\Pi^{\text{BBCS}}$} protocol.
        \end{tabular}
        }
        \caption{BBCS QOT protocol with the addition of a commitment functionality $\mathcal{F}_{com}$.}
        \label{ModQOT_prot}
    \end{figure}
There exist proofs of security for a practical version of the quantum oblivious transfer protocol enriched with the commitment scheme \cite{lemus2025performancepracticalquantumoblivious}.
All these proofs make use of unconditional binding commitment schemes. Moreover, the commitment scheme used in these proofs should be based on OWF only. Indeed, we want to use the qOT protocol to avoid weakness against quantum cryptanalysis and having unproven and not well-established hypotheses. If we had used a post-quantum commitment scheme, we could have directly employed a post-quantum OT protocol.

\section{Preliminaries}
\subsection{Definitions}
In simple terms, a commitment scheme is a protocol that allows a party called \textit{prover} to commit to a hidden value and send it to another party called \textit{verifier}. Once the value has been sent the verifier should not be able to check it until the sender gives their permission. Also, the prover should not be able to change its value once it has been sent. We can think of the protocol as a box: the sender sends their committed message to the verifier hidden in a locked box. Once the verifier has the locked box, the sender is not able to change the value anymore. Moreover, the verifier is not able to unlock and open the box, since the sender has the key. Finally, the sender sends the key to the verifier so the latter can check the correctness of the claimed value.

Following \cite{AC:JKPT12}, we give the formal definition of a commitment scheme:
\begin{defi}[Commitment Scheme]
    A \textit{commitment scheme} consists of three polynomial-time algorithms $(\Setup,\Com,\Ver)$ such that:
    \begin{itemize}
        \item[-] $\Setup(1^n)$ takes as input $1^n$, for a security parameter $n$, and outputs some public parameter $pk$ as a public commitment key, i.e., $pk \leftarrow \Setup(1^n)$;
        \item[-] $\Com(pk,m)$ takes as input a public key $pk$, and a message $m$. It outputs a commitment $c$, and a reveal value $d$, i.e., $(c, d) \leftarrow \Com(pk, m)$;
        \item[-] $\Ver(pk,m,c,d)$ takes as input a public key $pk$, a message $m$, a commitment $c$, and a reveal value $d$. It returns 1 or 0 to accept or reject, respectively, i.e., $b \leftarrow \Ver(pk, m, c, d)$, where $b \in \{0, 1\}$. 
    \end{itemize}
\end{defi}

Moreover, it must satisfy the following basic requirements:
\begin{itemize}
    \item[-] \textit{Perfect completeness}: the verification algorithm $\Ver$ outputs 1 whenever the inputs are computed honestly, i.e. for all messages $m$,
    \[ \mathbb{P} [  \Ver(pk,m,c,d) = 1 \mid pk \leftarrow \Setup(1^n), (c,d) \leftarrow \Com(pk,m)  ] = 1.\]
\end{itemize}
We say that a commitment scheme is \textit{interactive} if the commitment algorithm requires some exchange of messages between two or more parties. Otherwise, the commitment scheme is said to be \textit{non-interactive}.
\subsubsection{Security}
The two fundamental security properties of commitment schemes are the \textit{hiding} and the \textit{binding}, as introduced above. Let us give a formal definition of these: 
\begin{defi}[Binding Property]
    A commitment scheme $(\Setup,\Com,\Ver)$ is \textit{statistically} (respectively, \textit{computationally}) \textit{binding} if no (respectively, PPT) forger $\mathcal{P}^*$ can come up with a commitment and two different openings with noticeable (non-obvious) probability, i.e., for every (respectively, PPT) forger $\mathcal{P}^*$, and every two distinct messages $m$, $m'$ from the message space, there exists a negligible function $\epsilon (n)$ such that
    \[ \mathbb{P}\left [ \V_1 = \V_2 = 1|pk \hspace{-1mm} \leftarrow \Setup(1^n), (c,m,d,m',d') \leftarrow \mathcal{P^*}(pk) \right ] \le \epsilon(n),\]
    where $\V_1 = \Ver(pk,m,c,d)$, $\V_2 = \Ver(pk,m',c,d'), (c,d)$ $\leftarrow\Com(pk,m),(c,d')\leftarrow\Com(pk,m'), pk \leftarrow \Setup(1^n)$.
    
    The scheme is \textit{perfectly binding} if, with overwhelming probability over the choice of the public key $pk \leftarrow \Setup(1^n)$, it holds
    \[ \left (\Ver(pk,m,c,d) = 1 \land \Ver(pk,m',c,d')=1 \right ) \Rightarrow m = m'.\]
\end{defi}

\begin{defi}[Hiding Property]
    A commitment scheme $(\Setup,\Com,\Ver)$ is \textit{statistically} (respectively, \textit{computationally}) \textit{hiding} if no (respectively, PPT) distinguisher $\mathcal{V}^*$ is able to distinguish $(pk, c)$ and $(pk, c')$, for two distinct messages $m$, $m'$, with non-negligible advantage. This means that for every (respectively, PPT) distinguisher $\mathcal{V}^*$, and every two distinct messages $m$, $m'$ from the message space, there exists a negligible function $\epsilon(n)$ such that the statistical distance between $(pk, c)$ and $(pk, c')$ is $\epsilon (n)$, i.e.
    \[ \Delta \left ( (pk,c),(pk,c') \right ) \le \epsilon(n),\]
    where $(c,d) \leftarrow \Com(pk,m)$, $(c',d') \leftarrow \Com(pk,m')$, and $pk \leftarrow \Setup(1^n)$.\newline
    The scheme is \textit{perfectly hiding} if $(pk,c)$ and $(pk,c')$ are equally distributed.
\end{defi}

If one of these two properties holds statistically or perfectly, we say that it holds \textit{unconditionally}.
There exists an important negative result regarding these two properties. Namely, a commitment scheme cannot be both unconditionally hiding and unconditionally binding.
Thus, at least one of both the prover and verifier must be at most computationally bounded.

\subsection{Commitment Scheme for qOT}
As discussed in Section \ref{ssec:qot}, an unconditionally binding commitment scheme is required, ideally based on the existence of OWF.
There are two possible schemes satisfying this: one is based on domain-increasing hash functions \cite{halevi_practical_1996} and the other is the commitment scheme from Naor \cite{JC:Naor91}.
We will consider using the latter, which can be less time-consuming and gives the same level of security as the former approach with fewer assumptions.

Indeed, to prove a commitment scheme based on domain-increasing hash functions is statistically binding, the random oracle model assumption must be employed.
This is not necessary for Naor's commitment scheme.
Moreover, Naor's approach can also be faster if AES in counter mode is exploited as a secure pseudorandom number generator (PRNG), given the size of the commitment \cite{faz_hernandes_performance_2018}.
The only advantage of the hash approach is that it gives a non-interactive commitment scheme. However, the commitment must be executed, as already seen previously, in a protocol requiring interaction, so non-interactivity is of no benefit in this case.

\subsection{Naor's Bit Commitment Scheme}
The bit commitment scheme from Naor is described, as the proofs of security for the new schemes will be similar. \newline
Let $n$ be a security parameter. In particular, it guarantees the security of the used PRG for seeds of length $n$. Then the scheme works as follows:
\begin{itemize}
    \item[-] $\Setup(1^n)$: the setup algorithm chooses any pseudorandom generator $G$ that stretches a random seed with length $n$ to $3n$ bits, i.e., $G: \{0,1\}^n \rightarrow \{0,1\}^{3n}$;
    \item[-]$\Com(G,b)$: this algorithm has two phases:
    \begin{itemize}
        \item Commit phase: 
        \begin{itemize}
            \item[1.] $\mathcal{V}$ selects a random bit string \textbf{r} of length $3n$ and sends it to $\mathcal{P}$;
            \item[2.] After receiving \textbf{r}, $\mathcal{P}$ selects a random seed $\mathbf{x} \in \{0,1\}^n $ and returns the commitment string \textbf{c} to $\mathcal{V}$, where 
            \[ \mathbf{c} = \begin{cases}
                G(\mathbf{x}) \quad \quad \ \ \text{ if } b=0 \\ 
                G(\mathbf{x}) \oplus \mathbf{r} \quad \text{ if } b = 1
            \end{cases}\]
        \end{itemize}
        \item Reveal phase: to open the commitment, $\mathcal{P}$ sends $b$ and \textbf{x} to $\mathcal{V}$.
    \end{itemize}
    \item[-] $\Ver(G,\mathbf{r,c},b,\mathbf{x})$: $\mathcal{V}$ verifies that the values $b, \mathbf{x}$ and \textbf{r} match the previously given commitment.
\end{itemize}

Now, the properties of this commitment scheme are formally proven:
\begin{itemize}
    \item \textit{Computational Hiding:} we have to prove that no probabilistic polynomial-time (PPT) distinguisher $\mathcal{V^*}$ is capable of distinguishing between $(G, \mathbf{c})$ and $(G, \mathbf{c}')$ for two distinct bits $b$ and $b'$ with a non-negligible advantage. Specifically, we will demonstrate that if $G$ is a secure pseudorandom generator (PRG), then the scheme possesses the hiding property.
    To argue by contradiction, suppose the scheme is not hiding. This implies the existence of an adversary $\mathcal{A}$ who can distinguish between $(G, \mathbf{c})$ and $(G, \mathbf{c}')$ for two different bits $b$ and $b'$ with a non-negligible probability.
    Given this situation, we can construct another adversary $\mathcal{B}$ that can distinguish between a truly random string $\mathbf{r} \xleftarrow{\$} \{0,1\}^{3n}$ and a pseudorandom string $ G(\mathbf{s})$ for a seed $\mathbf{s} \xleftarrow{\$} \{0,1\}^n$.

    To model this attack, we refer to  \Cref{NaorCShiding}. Notably, we recognize that the adversary $\mathcal{A}$ can break the hiding property. In other words, $\mathcal{A}$ is able to distinguish between $G(\mathbf{s})$ and $G(\mathbf{s}) \oplus \mathbf{r}'$ with a non-negligible probability in $n$, where $\mathbf{s}$ is a random string chosen from $\{0,1\}^n$ and $\mathbf{r}' \in \{0,1\}^{3n}$ is chosen by $\mathcal{A}$.
    
    Now, let us assume the challenger selects $\mathbf{r}$ to be a truly random string, i.e., $\mathbf{r} \xleftarrow{\$} \{0,1\}^{3n}$.
    Under this assumption, it becomes crucial to analyze the behavior of $\mathcal{A}$ and understand how it exploits the properties of $G$ to distinguish between the given instances.
    \begin{figure}[!h]
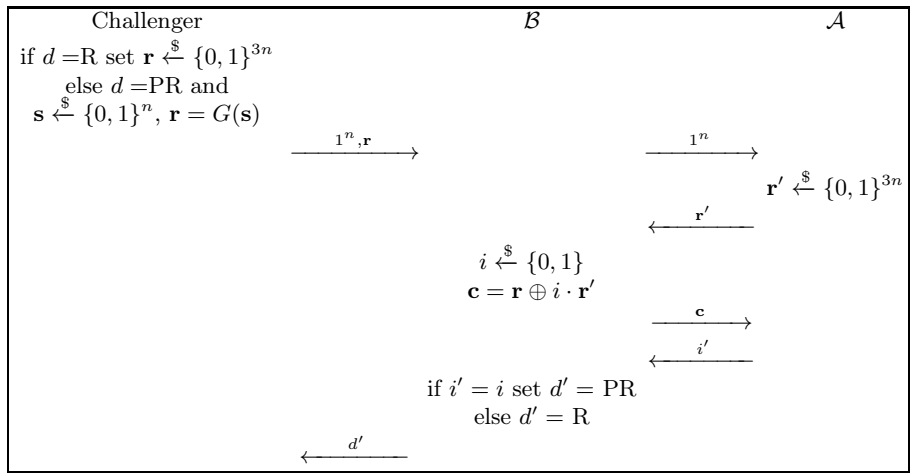

        \centering
        \resizebox{\columnwidth}{!}{
        \begin{tabular}{|*{1}{>{\centering\arraybackslash}p{.3\linewidth}} c c c c |}
            \hline
            Challenger & & $\mathcal{B}$ & & $\mathcal{A}$\\
             if $d=$R set $\mathbf{r} \xleftarrow{\$} \{0,1\}^{3n}$ & & & &\\
             else $d=$PR and $\mathbf{s} \xleftarrow{\$} \{0,1\}^{n}$, $\mathbf{r} = G(\mathbf{s})$ & & & & \\
             & $\xrightarrow{\hspace{15pt} 1^n,\mathbf{r} \hspace{15pt}}$ & &  $\xrightarrow{\hspace{15pt} 1^n \hspace{15pt}}$ & \\
             & & & & $\mathbf{r'} \xleftarrow{\$} \{0,1\}^{3n}$ \\
             & & & $\xleftarrow{\hspace{15pt} \mathbf{r'} \hspace{15pt}}$ & \\
             & & $i \xleftarrow{\$} \{0,1\}$ & & \\
             & & $\mathbf{c} = \mathbf{r} \oplus i \cdot \mathbf{r'}$ & & \\
             & & & $\xrightarrow{\hspace{15pt} \mathbf{c} \hspace{15pt}}$ & \\
             & & & $\xleftarrow{ \hspace{15pt}  i' \hspace{15pt}} $ & \\
             & & if $i'=i$ set $d'=$ PR & & \\
             & & else $d'=$ R & & \\
             & $\xleftarrow{\hspace{15pt} d' \hspace{15pt}}$ & & & \\
        \hline
        \end{tabular}
        }
        \caption{Attack on the PRG $G$ if the hiding property of Naor's bit commitment scheme can be broken.}
        \label{NaorCShiding}
    \end{figure} 
Then, $\mathbf{r} \oplus i \cdot \mathbf{r'}$ remains random for $i \in \{0,1\}$, so $\mathcal{A}$ cannot distinguish it with non-negligible probability.
        
Specifically, in this case, we have:
\[\mathbb{P}\left [i = i' \big | \mathbf{r} \xleftarrow{\$} \{0,1\}^{3n} \right ] = \frac{1}{2}.\]
Now, consider the scenario where the challenger instead chooses $\mathbf{r}$ as a pseudorandom string (i.e., $\mathbf{r} = G(\mathbf{s})$, where $\mathbf{s} \xleftarrow{\$} \{0,1\}^{n}$). In this situation, $\mathbf{c} = G(\mathbf{s}) \oplus i \cdot \mathbf{r'}$. Under this assumption, when $i=0$, $\mathbf{c}=G(\mathbf{s})$, which is a pseudorandom string generated by $G$ using a random seed $\mathbf{s}$.
Instead, when $i=1$, $\mathbf{c}$ is the XOR of a pseudorandom string $G(\mathbf{s})$ and a random string ($G(\mathbf{s}) \oplus \mathbf{r'}$).
This is precisely the case where $\mathcal{A}$ can distinguish with a non-negligible advantage the two cases. \newline
Thus, with a non-negligible advantage, $\mathcal{A}$ will correctly identify $i$, that is:
\[\mathbb{P}\left [i = i' \big | \mathbf{r} = G(\mathbf{s}), \ \mathbf{s} \xleftarrow{\$}\{0,1\}^n\right ] = \frac{1}{2} + non\text{-}neg(n).\]
Recall that the ultimate goal is to prove that $\mathcal{B}$ can distinguish a pseudorandom string from a random string with non-negligible probability, i.e.,
\[|\mathbb{P}[\mathcal{B} \text{ outputs PR}| \mathbf{r} \text{ is random}] - \mathbb{P}[\mathcal{B} \text{ outputs PR}| \mathbf{r} \text{ is pseudorandom}]| \]\[= non\text{-}neg(n).\]
In this case, the previous difference is:
\[ \left | \mathbb{P}\left [i = i' \big | \mathbf{r} = G(\mathbf{s}), \ s \xleftarrow{\$}\{0,1\}^n\right ] - \mathbb{P}\left [i = i' \big | \mathbf{r} \xleftarrow{\$} \{0,1\}^{3n} \right ] \right | =\] \[ \left | \left (\frac{1}{2} + non\text{-}neg(n) \right ) - \frac{1}{2} \right | = non\text{-}neg(n).\]

This result confirms that $\mathcal{B}$ can indeed distinguish between a pseudorandom string and a random string with a non-negligible probability, confirming the computational hiding property of the scheme.
\item \textit{Statistical Binding:} we must establish that for every potential forger $\mathcal{P}^*$ and for any two distinct messages $m$ and $m'$ from the message space, there exists a negligible function $\epsilon(n)$ such that
\[ \mathbb{P}[ \V_1 = \V_2 = 1|pk \leftarrow \Setup(1^n),(c,m,d,m',d') \leftarrow \mathcal{P^*}(pk)] \le \epsilon(n),\]
where $\V_1 = \Ver(pk,m,c,d)$, $\V_2 = \Ver(pk,m',c,d')$, $(c,d) \leftarrow \Com(pk,m)$, $(c,d') \leftarrow \Com(pk,m')$, and $ pk \leftarrow \Setup(1^n)$.\newline Considering the commitment notation, this requirement translates to
\[ \mathbb{P}\left [ G(\mathbf{x}) = \mathbf{c} = G(\mathbf{x'}) \oplus \mathbf{r} \mid \mathbf{r} \in \{0,1\}^{3n}, \ \mathbf{x},\mathbf{x}' \in \{0,1\}^n  \right ]. \]
It is important to note that the set $\{G(\mathbf{x}) \mid \mathbf{x} \in \{0,1\}^n\}$ contains at most $2^n$ elements. Consequently, the set $\{G(\mathbf{x}) \oplus G(\mathbf{x'}) \mid \mathbf{x} \in \{0,1\}^n, \mathbf{x'} \in \{0,1\}^n\}$ can have at most $2^{2n}$ elements. Given this, the probability that a randomly chosen string $\mathbf{r} \in \{0,1\}^{3n}$ falls within this set is at most \[\frac{2^{2n}}{2^{3n}} = 2^{-n}.\]

Therefore, this is the maximum probability with which a forger can identify two values that generate two different valid openings for the commitment. This extremely low probability, decreasing exponentially with $n$, ensures that the statistical binding property is verified, effectively preventing the forger from finding such values and thus maintaining the integrity and security of the commitment scheme.

\subsection{On Naor's String Commitment Scheme}
In \cite{JC:Naor91} Naor presented also a string version of his commitment scheme. This scheme is not analyzed here because not useful. What is interesting about this scheme, related to this work, is that the complexity to commit to a string of $t$ bits, communication-wise, is $4q+n+t$, where $n$ is the number of security bits regarding the binding property, $q$ must satisfy $q> \frac{3n}{\log \left (\frac{2}{2-\epsilon}\right )}$ and $\epsilon q$ is the minimum distance in a set of $2^t$ binary vectors of length $q$. To keep $q$ small, the set of $2^t$ vectors of length $q$ with the highest minimum distance should be picked. In general, this problem is known to be hard. However, it can be easily proved that the minimum distance in a set of at least 3 binary vectors of length $n$ is at most $\frac{2}{3}n$, leading to a total complexity of at least $7.8n + 2$ per 2-bit string we have to commit to.

As pointed out at the end of \cite{JC:Naor91}, Joe Kilian has suggested a slightly different method for amortizing the communication complexity when we have to commit to a bit string. In particular, we can commit to a seed \textbf{s} by committing to each of its bits separately, with the Naor's bit commitment scheme, and then commit to the string $(b_1, b_2, \dots, b_t)$ by providing its XOR with the pseudorandom sequence generated by \textbf{s}.
In this case we need two pseudorandom number generators: $G_1:\{0,1\}^n \rightarrow \{0,1\}^{3n}$ and $G_2: \{0,1\}^{|\mathbf{s}|} \rightarrow \{0,1\}^t$.\newline
However, this method starts to be effective when $t$ is at least $n^2$, since the communication cost is always at least $3n^2$, independently of the value of $t$. The hiding and binding properties of this protocol come directly from Naor's bit commitment scheme.

\end{itemize}

\section{New String Commitment based on Naor}

Given Naor's bit commitment scheme, the commitment $\mathbf{c}$ for a bit $b$ is calculated as $\mathbf{c} = G(\mathbf{x}) \oplus b \cdot \mathbf{r}$. From this, one can think that if $\mathcal{P}$ (the prover) needs to commit to more than one bit at a time, say a string of bits $(b_1, \dots, b_t)$, it can compute the commitment as $\mathbf{c} = G(\mathbf{x}) \oplus \sum_{i=1}^t b_i \cdot \mathbf{r}_i$. Indeed, the bit commitment is a special case of this with $t=1$.

However, in this case, the vectors $\{\mathbf{r}_i\}_{i=1}^t$ must be linearly independent. Indeed, if they are not linearly independent, then there exists a non-zero linear combination that gives the zero vector, namely $\sum_{i=1}^t b_i\mathbf{r}_i = \mathbf{0}$ with some string $(b_1, b_2, \dots, b_n)$ where $b_i \ne 0$ for at least one index $i$. But then, committing to the bit string $(b_1, b_2, \dots, b_n)$ is the same as committing to the string $(0,0, \dots, 0)$, so the statistical binding property is certainly not satisfied since the committer can always commit to $(b_1, b_2, \dots, b_n)$ and open to $(0,0, \dots, 0)$ and vice versa.

Also, if $\mathcal{V}$ has to choose and send all these vectors to $\mathcal{P}$, the communication cost can become very high. If there exists a ``good" way to generate these vectors starting from one initial vector $\mathbf{r}_1$ chosen by $\mathcal{V}$, this communication cost can be significantly reduced. This makes committing to a string of $t$ bits much more efficient compared to committing to $t$ bits separately, especially in terms of communication costs. For the moment, assume that the prover and the receiver have a random oracle $\mathcal{O}$ that returns random linearly independent vectors given as input a specific one.

To ensure a security level of $n$ bits for the binding property, we need to increase the length of the pseudorandom number generator output by $t$ bits. The protocol's flow is detailed in \Cref{ourBSC}.
\begin{figure}[!h]
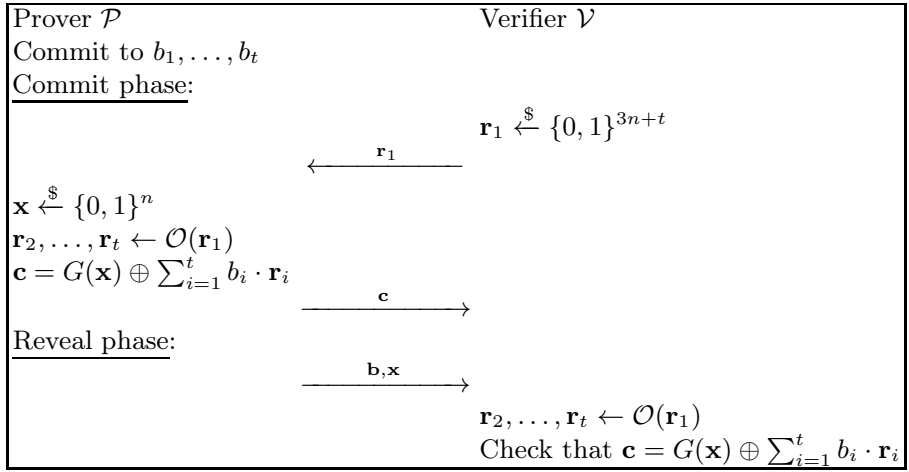

        \centering
        \resizebox{\columnwidth}{!}{
        \begin{tabular}{|l c l|}
            \hline
            Prover $\mathcal{P}$ & & Verifier $\mathcal{V}$ \\
            Commit to $b_1, \dots, b_t$ & & \\
            \underline{Commit phase}: & & \\
            & & $\mathbf{r}_1 \xleftarrow{\$} \{ 0,1 \} ^{3n+t}$ \\
            & $\xleftarrow{\hspace{20 pt}  \mathbf{r}_1 \hspace{20 pt}}$ & \\
            $\mathbf{x} \xleftarrow{\$} \{ 0,1 \} ^{n}$ & &\\
            $\mathbf{r}_2, \dots, \mathbf{r}_t \leftarrow \mathcal{O}(\mathbf{r}_1)$ & &\\
            $\mathbf{c} = G(\mathbf{x}) \oplus \sum_{i=1}^t b_i \cdot \mathbf{r}_i$ & &\\
            & $\xrightarrow{\hspace{24 pt} \mathbf{c} \hspace{24 pt}}$ & \\
            \underline{Reveal phase}: & & \\
            & $\xrightarrow{\hspace{20 pt} \mathbf{b}, \mathbf{x} \hspace{20 pt}}$ & \\
            & & $\mathbf{r}_2, \dots, \mathbf{r}_t \leftarrow \mathcal{O}(\mathbf{r}_1)$ \\
            & & Check that $\mathbf{c} = G(\mathbf{x}) \oplus \sum_{i=1}^t b_i \cdot \mathbf{r}_i$ \\
        \hline
        \end{tabular}
        }
        \caption{Natural extension of Naor's bit CS.}
        \label{ourBSC}
    \end{figure}
It is possible to prove that this commitment scheme is computationally hiding and statistically binding:    
\begin{itemize}
    \item \textit{Computational Hiding}:
    the poof of the computational hiding property can be obtained naturally by modifying that of Naor's bit commitment scheme. In particular, the attack is the same as the one described in \Cref{NaorCShiding}, just changing the fact that instead of committing to a bit, a bit string is committed to and the adversary $\mathcal{A}$ can guess the right values for $(b_1, \dots, b_t)$ with a probability of $\frac{1}{2^t} +$ $non$-$neg(n)$ when is given $\mathbf{r}=G(\mathbf{s})$, and with probability $\frac{1}{2^t}$ when $\mathbf{r} \xleftarrow{\$} \{0,1\}^{3n}$;  
   
    \item \textit{Statistical Binding}: suppose $\mathcal{P}$ wants to open the commitment to a different value $\mathbf{b}' \ne \mathbf{b}$ after already committing to $\mathbf{b}$. To do this, $\mathcal{P}$ needs to find a new $\mathbf{x}'$ such that $\mathbf{c} = G(\mathbf{x}') \oplus \sum_{i=1}^t b_i' \mathbf{r}_i$. We know that for any fixed value of $\mathbf{b}'$, the chance of successfully finding such an $\mathbf{x}'$ is very low, specifically bounded by $2^{-n-t}$. This can be shown by following the proof of the binding property in Naor's bit commitment scheme and adjusting the length of the pseudorandom string generated by $G$.
   
    Thus, the probability that $\mathcal{P}$ can find one such value to cheat is bounded by $2^{t} \cdot 2^{-n-t} = 2^{-n}$ since all possible values that allow $\mathcal{P}$ to cheat are $2^{t}-1$ (that is, all the possible values $\mathbf{b'} \ne \mathbf{b}$) and the probability for all is bounded by $2^{-n-t}$.
\end{itemize}

\subsection{On the Algorithm to Generate Some Linearly Independent Vectors From a Random One}
The previous protocol exploits a random oracle to generate some linearly independent vectors starting from a single random vector. In practice, however, it is not possible to rely on this, since the existence of random oracles is an idealized assumption. Thus, a deterministic algorithm to generate such vectors is needed. However, the generated vectors must be, in a certain sense, uniformly distributed (which is the case when they are randomly picked) between the linearly independent ones.

For example, suppose we start with one randomly chosen non-zero vector in $\{0,1\}^n$ with $n \geq 2$, and we pick the second vector fixing the first coordinate of the first vector, whose value is 1 and flipping all the other bits.
In this manner, we obtain two linearly independent vectors, as the second vector is not the zero vector (it retains a 1 from the first vector) and it is different from the first vector (at least one bit is flipped, since $n \geq 2$). However, with this approach, we know that the bitwise XOR between the two vectors is a vector with all coordinates set to 1 except for one coordinate, which is 0.

If we suppose the usage of an algorithm like this in a scheme like the one described in \Cref{ourBSC}, the statistical binding property is compromised. In particular, the binding property does not hold unconditionally anymore, it depends on the pseudorandom number generator. Indeed, if $G(\mathbf{x})$ is defined such that all vectors in the codomain with all 1s except for one 0 are images of some input $\mathbf{x}$, the prover can always find, by brute-force, an $\mathbf{x}'$ such that $G(\mathbf{x}) \oplus \sum_{i=1}^2 \mathbf{r}_i = G(\mathbf{x}')$, allowing the prover to commit to (1,1,0,$\dots$,0) and to open the commitment to the 0 string.

Therefore, all generated vectors and their linear combinations should be ``uniformly distributed''. Uniform distribution, in this case, means that fixed a non-zero linear combination of the vectors generated, by changing the value to the initial vector we have to obtain every time a different result.

Mathematically speaking, this means that the linear system given by $\mathbf{s} = \sum_{i=1}^t b_i \cdot \mathbf{r}_i$, must have a unique solution in the indeterminates describing the generated vectors, for every coefficient combination of $b_1, \dots, b_t$ and for every value of $\mathbf{s}$.

We were not able to find an algorithm satisfying such a strong property. However, vectors with a slightly weaker property can still be generated, which suffices to prove the binding property of the scheme.

To show and discuss the new protocol, we need a couple more definitions and results:
 \begin{defi}[Circulant Matrix]
     A circulant matrix is a square matrix in which each row is obtained by rotating one element to the right of the preceding row.
 \end{defi}

 \begin{defi}[Associated Polynomial of a Circulant Matrix]
     Let $C$ be the circulant matrix obtained rotating the vector $(a_0, a_1, \dots, a_{n-1})$. Then the polynomial $f = \sum_{i=0}^{n-1} a_ix^i$ is called the associated polynomial of $C$.
 \end{defi}
\begin{theo}\label{oddWeight}
    Let $C$ be a circulant matrix with entries in $\mathbb{F}_2$ of order $n=2^t$ for some $t \ge 0$. Then the following conditions are equivalent:
    \begin{itemize}
        \item The rank of $C$ is $n$;
        \item The degree of the greatest common divisor of $x^n - 1$ and the associated polynomial of $C$ is 0.
    \end{itemize}
\end{theo}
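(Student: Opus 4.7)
The plan is to reduce the rank question to a unit question in the polynomial quotient ring $R := \mathbb{F}_2[x]/(x^n - 1)$, via the classical isomorphism between the ring $\mathcal{C}_n$ of $n \times n$ circulant matrices over $\mathbb{F}_2$ and $R$. Concretely, letting $P$ denote the cyclic shift permutation matrix (so that $P^n = I$), every circulant of order $n$ can be written uniquely as $\sum_{i=0}^{n-1} a_i P^i$, so the evaluation map $\phi : R \to \mathcal{C}_n$ sending $x \mapsto P$ is a surjective ring homomorphism between two $\mathbb{F}_2$-vector spaces of dimension $n$, and hence a ring isomorphism. Under $\phi$ the associated polynomial $f$ of $C$ is sent precisely to $C$.

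Once this bridge is in place, the equivalence is standard. Since $C$ is square, the first condition (rank $n$) is equivalent to $C$ being invertible in $\mathcal{C}_n$, which by $\phi$ is equivalent to $f$ being a unit in $R$. By a Bezout argument in the principal ideal domain $\mathbb{F}_2[x]$, $f$ is a unit modulo $x^n - 1$ iff there exists $g$ with $f g \equiv 1 \pmod{x^n - 1}$, iff $\gcd(f(x), x^n - 1) = 1$ in $\mathbb{F}_2[x]$. Since the gcd is monic by convention, this is exactly the condition that $\deg \gcd(f, x^n - 1) = 0$, i.e., the second condition.

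The main (mild) obstacle is establishing the ring isomorphism cleanly; showing directly that matrix multiplication of circulants corresponds to polynomial multiplication modulo $x^n - 1$ requires a small but careful index computation, which I would avoid by going through the cyclic shift $P$ as above rather than multiplying generic circulants by hand. I would also remark that the hypothesis $n = 2^t$ plays no logical role in the equivalence itself (the proof works over any field and any $n$); its presence reflects the intended use of the theorem, where combined with the identity $x^{2^t} - 1 = (x - 1)^{2^t}$ in $\mathbb{F}_2[x]$ one obtains the further equivalence that $f(1) \neq 0$, i.e., that the Hamming weight of the first row of $C$ is odd, explaining the label \texttt{oddWeight}.
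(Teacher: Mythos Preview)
Your argument is correct, and it takes a genuinely different route from the paper. The paper argues each implication separately and leans on the special factorization $x^{2^t}-1=(x-1)^{2^t}$ in $\mathbb{F}_2[x]$: for $\deg g=0\Rightarrow\operatorname{rank}=n$ it translates a nontrivial linear dependence among rows into $(x^n-1)\mid f\cdot\sum_{i\in I}x^i$ and uses that $x-1$ is the only prime factor to force a contradiction; for the converse it observes that $\deg g>0$ forces $(x-1)\mid f$, i.e.\ $f(1)=0$, so the row vectors have even weight and their sum is $\mathbf{0}$. Your approach instead passes through the ring isomorphism $\mathcal{C}_n\cong\mathbb{F}_2[x]/(x^n-1)$ and the B\'ezout characterization of units, which is cleaner, works over any field and any $n$, and makes transparent (as you note) that the hypothesis $n=2^t$ is irrelevant to the equivalence itself and only matters for the downstream ``odd weight'' corollary.

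One small point worth tightening: you equate ``$\operatorname{rank}(C)=n$'' with ``$C$ invertible in $\mathcal{C}_n$'' in one breath, but a priori $\operatorname{rank}(C)=n$ only gives invertibility in $M_n(\mathbb{F}_2)$. The missing half-line is that the inverse of a circulant is again circulant (e.g.\ because $C^{-1}$ commutes with the cyclic shift $P$, or because $R$ is a finite ring so non-zero-divisors are units). With that remark your proof is complete.
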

\begin{proof}
Let $g$ be the greatest common divisor of $f$ and $x^n - 1$. First, assume $f \neq 0$, since if  $f = 0$, the corresponding circulant matrix is the zero matrix. Now, suppose $g \neq 0$. By definition, $g$ divides $x^n - 1 = x^{2^t} - 1 = (x - 1)^{2^t} = (x \oplus 1)^{2^t}$, due to the fact that we are operating in $\mathbb{F}_2$, where subtraction is equivalent to addition (i.e., $- = \oplus$) and in finite fields with characteristic $p$, powers of $p$ can be distributed over sums. Consequently, $g$ divides $(x \oplus 1)^{2^t}$. Since $x \oplus 1$ is irreducible, we have either $g = 1$ or  $x \oplus 1$ divides $g$.

We now proceed to prove both implications:

\begin{enumerate}
    \item $\deg(g) = 0$ $\Rightarrow$ $\text{rank}(C) = n$:
    
    Suppose, for the sake of contradiction, that $\text{rank}(C) < n$. This implies the existence of a non-zero linear combination of the matrix vectors that results in the zero vector. Each vector can be identified with a polynomial in $ \mathbb{F}_2[x] / (x^n - 1) $. Specifically, the polynomial corresponding to the first vector is the polynomial $ f $; the second vector corresponds to $ x \cdot f \mod{x^n - 1} $, and so on. Finding a non-zero linear combination that generates $ \mathbf{0} $ translates into finding a subset $ I \subseteq [n-1] \cup \{0\} $ such that $ \sum_{i \in I} \mathbf{x}_i = \mathbf{0} $. In polynomial terms, this means $ \sum_{i \in I} x^i \cdot f \equiv 0 \mod{x^n -1} $, which implies that $ x^n - 1 $ divides $ f \cdot \sum_{i \in I} x^i $. Since $ x^n - 1 = (x \oplus 1)^{2^t} $, two possibilities arise: either $ x \oplus 1 $ divides $ f $, which implies that $ x \oplus 1 $ divides both $ f $ and $ x^n - 1 $, leading to $ \deg(g) > 0 $, contradicting our initial assumption; or $ (x \oplus 1)^{2^t} $ divides $ \sum_{i \in I} x^i $. Given that $ I \subseteq [n-1] \cup \{0\} $, this is only possible if $ I = \emptyset $, which contradicts our assumption of a non-zero linear combination.

    \item $\text{rank}(C) = n$ $\Rightarrow$ $\deg(g) = 0$:
    
    Suppose, for the sake of contradiction, that $ \deg(g) \neq 0 $. According to the earlier discussion, this implies $x \oplus 1 $ divides $g$. Since $ g $ divides $ f $ by definition, $ f $ must have a root in 1. This means that $ f $ has an even number of terms. If this is the case, it is easy to check that the sum of all vectors in the matrix is $ \mathbf{0} $, implying that the rank of the matrix is strictly less than $ n $. Therefore, $ g = 1 $, and thus its degree is 0.
\end{enumerate}
\end{proof}
So, for the generation of vectors, we can consider the following algorithm:  
\begin{align} \label{finalAlg}
    \mathbf{x} &= \begin{pmatrix}
        x_1 \\
        x_2 \\
        \vdots \\
        x_{n-1 } \\
        x_n
    \end{pmatrix}
    \rightarrow \begin{matrix}
        x_1 & x_n & \cdots_{\phantom{n}} & x_{3\phantom{-n}} & x_2 \\
        x_2 & x_1 & \cdots_{\phantom{n}} & x_{4\phantom{-n}} & x_3 \\
        \vdots & \vdots & \ddots_{\phantom{n}} & \vdots_{\phantom{-n}} & \vdots \\
        x_{n-1} & x_{n-2} & \cdots_{\phantom{n}} & x_{1\phantom{-n}} & x_n \\
        x_n & x_{n-1} & \cdots_{\phantom{n}} & x_{2\phantom{-n}} & x_1
    \end{matrix}
\end{align}
\hspace{5.7cm} $\mathbf{x}_1$ \hspace{2.5mm} $\mathbf{x}_2$ \hspace{.0mm} $\dots$ \hspace{-0.5mm}  $\mathbf{x}_{n-1}$ \hspace{-0.5mm}  $\mathbf{x}_n$ \\
\\
For  $n = 2^t$, a full-rank matrix can be constructed by selecting an initial vector with odd weight. This ensures that the associated polynomial  $f$ has no root at $x = 1$, meaning $ x - 1$ does not divide $f$. Consequently, the greatest common divisor of $f$ and $x^n - 1 = x^{2^t} - 1 = (x - 1)^{2^t}$ is 1. By Theorem \ref{oddWeight}, this guarantees that the resulting matrix is full rank.

\begin{prop}
   The circulant matrix of dimension $n=2^t$ obtained by rotating an initial vector of odd weight is complete (i.e. full rank).  
\end{prop}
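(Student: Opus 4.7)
The plan is to reduce the statement directly to \Cref{oddWeight}, which gives an algebraic criterion for full rank of a circulant matrix of order $n = 2^t$ over $\mathbb{F}_2$, and then verify that criterion under the odd-weight hypothesis on the generating vector.

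First I would recall the setup. Let the initial vector be $\mathbf{a} = (a_0, a_1, \dots, a_{n-1}) \in \mathbb{F}_2^n$ with odd Hamming weight, and let $f(x) = \sum_{i=0}^{n-1} a_i x^i \in \mathbb{F}_2[x]$ be its associated polynomial in the sense of the definition above. By \Cref{oddWeight}, the circulant matrix $C$ built from $\mathbf{a}$ has full rank $n$ if and only if $\deg(\gcd(f, x^n-1)) = 0$. So it suffices to establish this gcd condition.

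Next I would use the crucial simplification available in characteristic $2$ with $n = 2^t$, namely $x^n - 1 = (x-1)^{2^t} = (x \oplus 1)^{2^t}$. Since $x \oplus 1$ is irreducible over $\mathbb{F}_2$, any non-constant divisor of $x^n - 1$ must be divisible by $x \oplus 1$. Therefore $\gcd(f, x^n - 1)$ is non-constant if and only if $(x \oplus 1) \mid f$, which in turn is equivalent to $f(1) = 0$.

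Finally, I would compute $f(1) = \sum_{i=0}^{n-1} a_i \pmod 2$, which is precisely the Hamming weight of $\mathbf{a}$ reduced modulo $2$. By the odd-weight hypothesis this equals $1 \ne 0$, so $(x \oplus 1) \nmid f$, hence $\gcd(f, x^n - 1) = 1$ has degree $0$, and \Cref{oddWeight} then yields that $C$ has rank $n$, i.e., is full rank. There is no real obstacle here: essentially all the heavy lifting has already been done in \Cref{oddWeight}; the proposition is the clean corollary obtained by observing that odd weight is exactly the condition $f(1) = 1$ in $\mathbb{F}_2$.
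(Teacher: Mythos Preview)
Your proposal is correct and follows essentially the same argument as the paper, which observes that odd weight forces $f(1)=1$, hence $x\oplus 1\nmid f$, so $\gcd(f,x^n-1)=\gcd(f,(x\oplus 1)^{2^t})=1$, and then invokes \Cref{oddWeight}. Your write-up is slightly more detailed but the logical chain is identical.
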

Proving the unconditional binding property of the new commitment scheme constructed in this paper requires one last result:
\begin{theo}\label{LinIndThm}
    Let $z$ be a power of 2 greater than $2t$, $\mathbf{b} \ne \mathbf{0}$, $\mathbf{s} = \sum_{i=1}^z \hat{b}_i \cdot \mathbf{r}_i$, where $\hat{b}_i = b_i$ for $i = 1,\dots,t$ and $\hat{b}_i = 0$ otherwise, and $\{\mathbf{r}_i\}_{i=1}^z$ are vectors generated with algorithm  in \Cref{finalAlg} $($namely, $\mathbf{r}_i=\mathbf{x}_i)$, where $\mathbf{r}_1$ has odd weight. Then, the matrix representing the linear system
    \[
    \begin{cases}
        s_1 =  \sum_{k=1}^z b_k \cdot x_{1+k-1 \mod z} \\
        \hspace{0.5cm}\vdots \\
        s_z = \sum_{k=1}^z b_k \cdot x_{z+k-1 \mod z}
    \end{cases}
    \]
    has at least $z/2$ linearly independent vectors.
\end{theo}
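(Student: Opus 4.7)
The plan is to recognize the coefficient matrix of the system as a circulant matrix built from $\mathbf{b}$, and then to extend \Cref{oddWeight} from a rank-equals-$z$ criterion to a rank formula controlled by the multiplicity of $x\oplus 1$ in the associated polynomial. Concretely, I will read off the degree of that polynomial directly from the assumption $\deg\mathbf{b}\le t-1$, combine this with $z>2t$, and obtain a lower bound on the rank.

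First I would rewrite the $j$-th equation as a cyclic convolution in the $x_i$'s: since the $\mathbf{r}_i$ are the columns of the circulant matrix generated by $\mathbf{x}$ via \Cref{finalAlg}, the coefficient matrix $C_{\hat{\mathbf{b}}}$ of the system (in the unknowns $x_1,\dots,x_z$) is itself circulant, with associated polynomial $\hat b(x)=\sum_{k=1}^{z}\hat b_k\,x^{k-1}$. Because $\hat b_k=0$ for $k>t$, we have $\deg \hat b(x)\le t-1$, and $\mathbf{b}\neq\mathbf{0}$ ensures $\hat b(x)\neq 0$ in $\mathbb{F}_2[x]$.

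Next, following the polynomial identification already used inside the proof of \Cref{oddWeight}, I would identify the (right) kernel of $C_{\hat{\mathbf{b}}}$ with the set of $v(x)\in\mathbb{F}_2[x]/(x^z-1)$ such that $v(x)\,\hat b(x)\equiv 0\pmod{x^z-1}$. Since $z=2^u$ in characteristic $2$, we have $x^z-1=(x\oplus 1)^z$. Writing $\hat b(x)=(x\oplus 1)^k g(x)$ with $\gcd(g,x\oplus 1)=1$, the kernel condition collapses to $(x\oplus 1)^{z-k}\mid v(x)$, which gives a kernel of dimension exactly $k$ and therefore $\mathrm{rank}(C_{\hat{\mathbf{b}}})=z-k$. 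Since $\hat b(x)\neq 0$ and $\deg \hat b(x)\le t-1$, the multiplicity $k$ of $x\oplus 1$ in $\hat b(x)$ satisfies $k\le t-1$, so $\mathrm{rank}(C_{\hat{\mathbf{b}}})\ge z-t+1$. The hypothesis $z>2t$ then yields $z-t+1>z/2$, which is actually stronger than the stated conclusion.

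The main obstacle will be promoting \Cref{oddWeight} from its stated ``rank $=z$ iff gcd is trivial'' form to the sharper statement ``nullity equals the multiplicity of $x\oplus 1$ in the associated polynomial''. The original proof already sets up the polynomial correspondence; what must be added is a count of the solutions of $v(x)\,\hat b(x)\equiv 0\pmod{(x\oplus 1)^z}$, which follows from the fact that $\mathbb{F}_2[x]/(x\oplus 1)^z$ is a local principal ideal ring whose only ideals are the $((x\oplus 1)^i)$ for $0\le i\le z$. Once this slight generalization is in place, the degree bound $\deg\hat b\le t-1$ and the arithmetic $z>2t$ finish the argument essentially by inspection.
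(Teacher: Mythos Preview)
Your argument is correct and in fact yields the sharper bound $\mathrm{rank}(C_{\hat{\mathbf b}})\ge z-t+1>z/2$; the identification of the kernel with the annihilator ideal in $\mathbb{F}_2[x]/(x\oplus 1)^z$ and the resulting nullity-$=$-multiplicity formula are sound, and neither your proof nor the paper's actually uses the odd-weight hypothesis on $\mathbf r_1$ for this particular statement.

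The paper, however, takes a much more elementary route. It simply observes that the first row of the circulant matrix $C_{\hat{\mathbf b}}$ has its nonzero entries confined to the first $t$ positions, so the last nonzero entry (a $1$, since we are over $\mathbb{F}_2$) is followed by at least $z-t>z/2$ consecutive zeros. Taking $z/2$ successive one-step cyclic rotations of this row and restricting to a suitable window of $z/2$ columns then exhibits a lower-triangular $z/2\times z/2$ submatrix with $1$'s on the diagonal, which immediately gives $z/2$ linearly independent rows. No polynomial algebra, no generalization of \Cref{oddWeight}, and no analysis of the ideal structure of $\mathbb{F}_2[x]/(x\oplus 1)^z$ is needed. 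What your approach buys is an exact rank formula (and hence a tighter bound) together with a clean conceptual link back to \Cref{oddWeight}; what the paper's approach buys is a two-line proof that requires nothing beyond the pigeonhole-style observation about trailing zeros.
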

\begin{proof}
    The matrix representing the linear system is a circulant matrix. Moreover, in the first row, there is a 1 followed by at least $z/2$ consecutive zeros. If we consider the $z/2$ vectors obtained by rotating this vector one bit to the right each time, they can be grouped and seen as forming a lower triangular matrix with 1s on the diagonal. Therefore, they are linearly independent.
\end{proof}
With these properties, the commitment scheme described in \Cref{ourBSC2fin} can be proved to be statistically binding. 
\begin{figure}
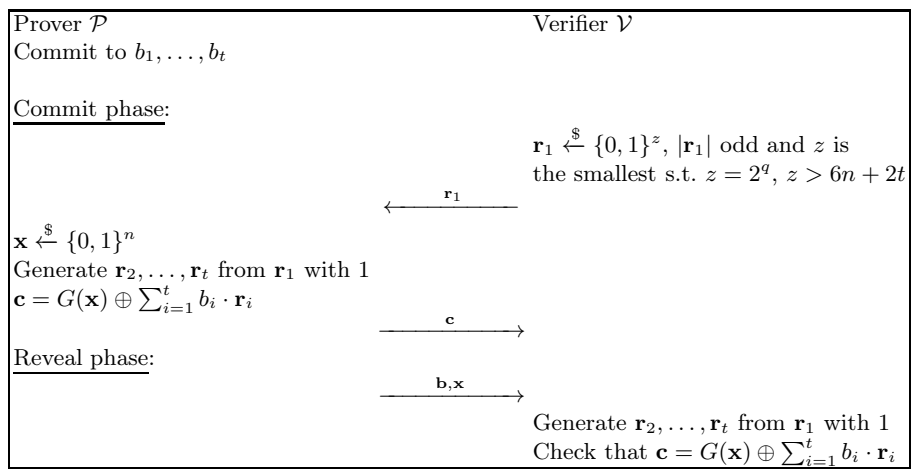

        \centering
        \resizebox{\columnwidth}{!}{
        \begin{tabular}{|l c l|}
            \hline
            Prover $\mathcal{P}$ & & Verifier $\mathcal{V}$ \\
            Commit to $b_1, \dots, b_t$ & & \\
            & & \\
            \underline{Commit phase}: & & \\
            & & $\mathbf{r}_1 \xleftarrow{\$} \{ 0,1 \} ^{z}$, $|\mathbf{r}_1|$ odd and $z$ is\\
            & &the smallest s.t. $z=2^q$, $z>6n+2t$ \\
            & $\xleftarrow{\hspace{20 pt}  \mathbf{r}_1 \hspace{20 pt}}$ & \\
            $\mathbf{x} \xleftarrow{\$} \{ 0,1 \} ^{n}$ & &\\
            Generate $\mathbf{r}_2, \dots, \mathbf{r}_{t}$ from $\mathbf{r}_1$ with \ref{finalAlg} & &\\
            $\mathbf{c} = G(\mathbf{x}) \oplus \sum_{i=1}^{t} b_i \cdot \mathbf{r}_i$ & &\\
            & $\xrightarrow{\hspace{24 pt} \mathbf{c} \hspace{24 pt}}$ & \\
            \underline{Reveal phase}: & & \\
            & $\xrightarrow{\hspace{20 pt} \mathbf{b}, \mathbf{x} \hspace{20 pt}}$ & \\
            & & Generate $\mathbf{r}_2, \dots, \mathbf{r}_{t}$ from $\mathbf{r}_1$ with \ref{finalAlg} \\
            & & Check that $\mathbf{c} = G(\mathbf{x}) \oplus \sum_{i=1}^{t} b_i \cdot \mathbf{r}_i$ \\
        \hline
        \end{tabular}
        }
        \caption{Our final extension of Naor's bit commitment scheme.}
        \label{ourBSC2fin}
\end{figure}

Suppose indeed the prover wants to commit to a $t$-bit string $b_1, \dots, b_t$. The verifier then sends the initial vector $\mathbf{r}_1$, with odd weight, which is the smallest power of 2 longer than $6n + 2t$. Let $z$ be this length. The prover will encode the string it wants to commit to into a $z$-bit string by appending zeros to the original string. The procedure then continues in the same way as in the previous protocol. If the prover wants to open the string $\mathbf{b}$ to a different one $\mathbf{b}'$, it must find an $\mathbf{x}'$ such that $\mathbf{s} = \sum_{i=1}^z (\hat{b}_i \oplus \hat{b}_i') \mathbf{r}_i = G(\mathbf{x}) \oplus G(\mathbf{x}')$. Thus, $\mathbf{s}$ should belong to the set $\{ G(\mathbf{x}) \oplus G(\mathbf{x}') \mid \mathbf{x}, \mathbf{x}' \in \{0,1\}^n \}$, which has at most $2^{2n}$ elements. Now, thanks to \Cref{LinIndThm} we know that at least half of the bits of $\mathbf{s}$ are uniformly distributed. 
Specifically, for every linear combination (this time given by the coefficients $\hat{b}_i \oplus \hat{b}_i'$, i.e. for every possible value the prover tries to open the commitment to) $\mathbf{s}$ lays in a vector subspace of dimension at least $z/2 > (6n+2t)/2$. This means that such an $\mathbf{x}'$ exists with a probability of at most \[\frac{2^{2n}}{2^{(6n+2t)/2}} = \frac{2^{2n}}{2^{3n+t}}.\] Thus, the final probability of cheating can be bounded by $2^{-n}$, as before.
The computational hiding property comes directly from Naor's commitment scheme.
The communication complexity of this scheme is $2z+n+t$, where $z$ is the smallest power of 2 strictly greater than $6n+2t$. This complexity is significantly easier to quantify compared to that of Naor's string commitment scheme.

\subsection{Extending Kilian with Our Approach}
Starting from the natural extension of Naor and from the Kilian method we can devise a new scheme whose flow is described in \Cref{ourBSC+Kilian}. In particular, here we have two pseudorandom number generators $G_1:\{0,1\}^n \rightarrow \{0,1\}^{z}$ and $G_2: \{0,1\}^l \rightarrow \{ 0,1 \}^t$.
\begin{figure}[!h]
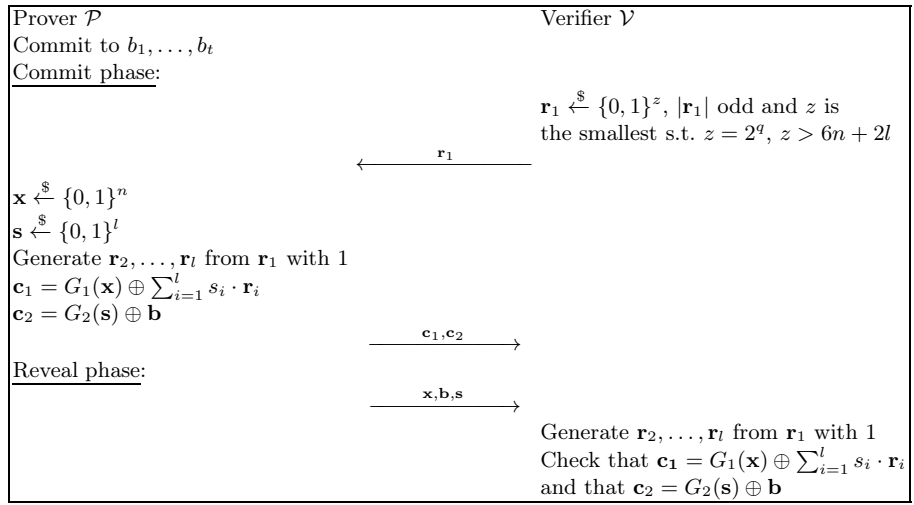

        \centering
        \resizebox{\columnwidth}{!}{
        \begin{tabular}{|l c l|}
            \hline
            Prover $\mathcal{P}$ & & Verifier $\mathcal{V}$ \\
            Commit to $b_1, \dots, b_t$ & & \\
            \underline{Commit phase}: & & \\
            & & $\mathbf{r}_1 \xleftarrow{\$} \{ 0,1 \} ^{z}$, $|\mathbf{r}_1|$ odd and $z$ is\\
            & &the smallest s.t. $z=2^q$, $z>6n+2l$ \\
            & $\xleftarrow{\hspace{30 pt}  \mathbf{r}_1 \hspace{30 pt}}$ & \\
            $\mathbf{x} \xleftarrow{\$} \{ 0,1 \} ^{n}$ & &\\
            $\mathbf{s} \xleftarrow{\$}\{0,1\}^l$ & & \\
            Generate $\mathbf{r}_2, \dots, \mathbf{r}_{l}$ from $\mathbf{r}_1$ with \ref{finalAlg} & &\\
            $\mathbf{c}_1 = G_1(\mathbf{x}) \oplus \sum_{i=1}^l s_i \cdot \mathbf{r}_i$ & &\\
            $\mathbf{c}_2 = G_2(\mathbf{s}) \oplus \mathbf{b}$ & & \\
            & $\xrightarrow{\hspace{20 pt} \mathbf{c}_1, \mathbf{c}_2 \hspace{20 pt}}$ & \\
            \underline{Reveal phase}: & & \\
            & $\xrightarrow{\hspace{20 pt} \mathbf{x}, \mathbf{b}, \mathbf{s} \hspace{20 pt}}$ & \\
            & & Generate $\mathbf{r}_2, \dots, \mathbf{r}_{l}$ from $\mathbf{r}_1$ with \ref{finalAlg}\\
            & & Check that $\mathbf{c_1} = G_1(\mathbf{x}) \oplus \sum_{i=1}^l s_i \cdot \mathbf{r}_i$ \\
            & & and that $\mathbf{c}_2 = G_2(\mathbf{s}) \oplus \mathbf{b}$ \\
        \hline
        \end{tabular}
        }
        \caption{Natural extension of Naor's bit CS + Kilian.}
        \label{ourBSC+Kilian}
    \end{figure}
In this approach, the Kilian technique is employed by committing to a seed of a pseudorandom number generator instead of the entire bit string. The bit string is then masked by XORing it with another bit string generated by a pseudorandom number generator, using the previously committed seed. The key difference from the scheme described by Kilian in \cite{JC:Naor91} lies in the method of seed commitment. Specifically, the seed is committed with the natural extension of Naor described in \Cref{ourBSC2fin}. Recall that this scheme is more efficient compared to the normal extension when the string to commit to is much greater than the length of a secure seed of a PRNG.

The proofs of computational hiding and statistical binding properties naturally follow the proofs of our extension of Naor and the Kilian method.

\subsection{The Commitment Scheme Used in qOT}
The commitment scheme used in the qOT protocol is shown in \Cref{2BSC}. Observe that this does not follow the rules of the general string commitment scheme described in \Cref{ourBSC2fin}. It is an ad-hoc version for committing to 2-bit strings.
\begin{figure}[!h]
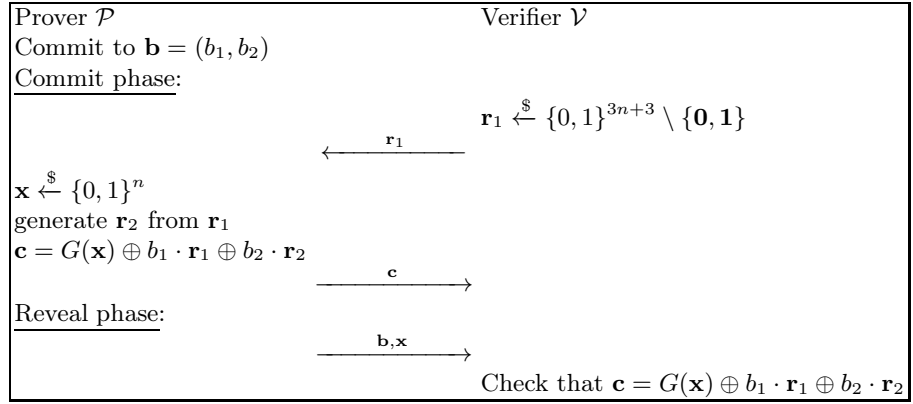

        \centering
        \resizebox{\columnwidth}{!}{
        \begin{tabular}{|l c l|}
            \hline
            Prover $\mathcal{P}$ & & Verifier $\mathcal{V}$ \\
            Commit to $\mathbf{b}=(b_1, b_2)$ & & \\
            \underline{Commit phase}: & & \\
            & & $\mathbf{r}_1 \xleftarrow{\$} \{ 0,1 \} ^{3n+3} \setminus \{\mathbf{0}, \mathbf{1}\}$ \\
            & $\xleftarrow{\hspace{20 pt}  \mathbf{r}_1 \hspace{20 pt}}$ & \\
            $\mathbf{x} \xleftarrow{\$} \{ 0,1 \} ^{n}$ & &\\
            generate $\mathbf{r}_2$ from $\mathbf{r}_1$ & &\\
            $\mathbf{c} = G(\mathbf{x}) \oplus b_1 \cdot \mathbf{r}_1 \oplus b_2 \cdot \mathbf{r}_2$ & &\\
            & $\xrightarrow{\hspace{24 pt} \mathbf{c} \hspace{24 pt}}$ & \\
            \underline{Reveal phase}: & & \\
            & $\xrightarrow{\hspace{20 pt} \mathbf{b}, \mathbf{x} \hspace{20 pt}}$ & \\
            & & Check that $\mathbf{c} = G(\mathbf{x}) \oplus b_1 \cdot \mathbf{r}_1 \oplus b_2 \cdot \mathbf{r}_2$ \\
        \hline
        \end{tabular}
        }
        \caption{Extension of Naor's bit CS for 2-bit strings.}
        \label{2BSC}
    \end{figure}
Initially, the verifier generates a random vector $\mathbf{r}_1$ of length $3n + 3$, ensuring that it differs from both the all-zero vector and the all-one vector. This vector $\mathbf{r}_1$ is then sent to the prover. In response, the prover selects a random seed $\mathbf{x}$ of length $n$ for the pseudorandom number generator and generates $\mathbf{r}_2$ by rotating $\mathbf{r}_1$ by one position. The commitment is subsequently computed following the same procedure as in the commitment scheme described in \Cref{ourBSC2fin}, and the verification phase proceeds in the same manner.

Clearly, the hiding property of this scheme follows from the security of the pseudorandom number generator. The binding property instead, can be proved to hold statistically. Suppose indeed the prover has already committed to $\mathbf{b}$ and wants to open the commitment to a different string. Then it should find an $\mathbf{x}'$ such that \[(b_1 \oplus b_1')\mathbf{r}_1 \oplus (b_2 \oplus b_2')\mathbf{r}_2 = G(\mathbf{x}) \oplus G(\mathbf{x'})\] for a $\mathbf{b}' \ne \mathbf{b}$. As before, the set $\{G(\mathbf{x}) \oplus G(\mathbf{x}') |\mathbf{x},\mathbf{x}' \in \{0,1\}^n\}$ has at most $2^{2n}$ elements. Now, there are three cases:
\begin{itemize}
    \item if $\mathbf{b}' \oplus \mathbf{b} = (1,0)$ then the above equation is \[ \mathbf{r}_1 = G(\mathbf{x}) \oplus G(\mathbf{x'})\]
    and since $\mathbf{r}_1$ is randomly chosen, the probability that such an $\mathbf{x'}$ exists is \[ \frac{2^{2n}}{2^{3n+3}-2};\]
    \item if $\mathbf{b}' \oplus \mathbf{b} = (1,0)$ then we have the same as before with $\mathbf{r}_2$ in place of $\mathbf{r}_1$, but $\mathbf{r}_2$ is obtained by rotating $\mathbf{r}_1$, so it is random as well, and the same as before can be said;
    \item if instead $\mathbf{b}' \oplus \mathbf{b} = (1,1)$ then the equation the prover must be able to satisfy is \[ \mathbf{r}_1 \oplus \mathbf{r}_2= G(\mathbf{x}) \oplus G(\mathbf{x'}).\] Now, for every value of $\mathbf{r}_1$, $\mathbf{r}_1 \oplus \mathbf{r}_2$ has even weight. However, there are no more constraints on it. So, it is a random value in a subspace of dimension $3n+3-1$, and the probability of finding an $\mathbf{x}'$ satisfying the above equation is \[\frac{2^{2n}}{2^{3n+3-1}-1}\] (it is not possible that $\mathbf{r}_1 \oplus \mathbf{r}_2 = \mathbf{0}$ because it would imply $\mathbf{r}_1 = \mathbf{r}_2$ and, in our case, this implies either $\mathbf{r}_1 = \mathbf{0}$ or $\mathbf{r}_1=\mathbf{1}$.) 
\end{itemize}
In the end, there is a probability of at most $\frac{2^{2n}}{2^{3n+3-1} -1} \simeq 2^{-n-2}$ that an $\mathbf{x}'$ allowing the prover to open the commitment to another specific string exists (the above estimation can be considered as an equality since the value of $n$, as length of the seed of the PRNG, should be at least 128, but also 256 in quantum environments, to bring a good enough level of security regarding the hiding property. So, the effect of subtracting 1 from the denominator is negligible). In the end, the probability that there exists an $\mathbf{x}'$ that allows the prover to open the commitment to any other string is bounded by $4 \cdot 2^{-n-2} = 2^{-n}$.  With this approach, the communication complexity is $7n+8$ bits for each 2-bit string to be committed to.

\section{A New Pre-Processing Model for Commitments}
Despite our improvements, for qOT a commitment scheme satisfying the subvector opening property is needed.
This means that the prover should be able to open only some of the committed bits.
However, we are not aware of any efficient vector commitment scheme (where the prover can indeed open only a subset of committed bits) or, in general, string commitment scheme with the subvector opening property satisfying the unconditional binding property based on the existence of OWF only.
In this situation, one has to commit to all the bits (or all the measurements, i.e. couple of bits) separately.
This is very time-consuming, considering that qOT requires a lot of quantum communication (so a lot of states to be measured and a lot of measurements to be committed).

We therefore introduce a new preprocessing model to improve the online phase of commitment schemes.
By letting the sender and receiver of the qOT exchange data during a preprocessing phase, i.e., before the actual qOT begins, the receiver could commit to random bits.
During the actual qOT execution, to commit to its measurements, the prover can then simply XOR the bits it wants to commit to with the bits already committed in the preprocessing phase.
To open a specific commitment, the prover reveals the commitment from the preprocessing phase and discloses the committed bit.
In this way, the verifier can check that the bit committed to in the preprocessing phase meets the commitment's conditions and can verify that the XOR result is as expected.

This method is extremely fast and efficient during the qOT execution. Additionally, its security is equivalent to the security of the commitment scheme used in the preprocessing phase.
For example, suppose the prover committed to a bit $m$ in the preprocessing phase using an unconditionally binding commitment scheme, such as Naor's scheme. To commit to a new bit $b$, the prover simply publishes $b \oplus m$. Then, to open the commitment, the prover reveals $b$ and everything needed to reveal the commitment generated with $m$. The flow of the protocol is described in \Cref{NaorBCwithPre}.
    \begin{figure}[!h]
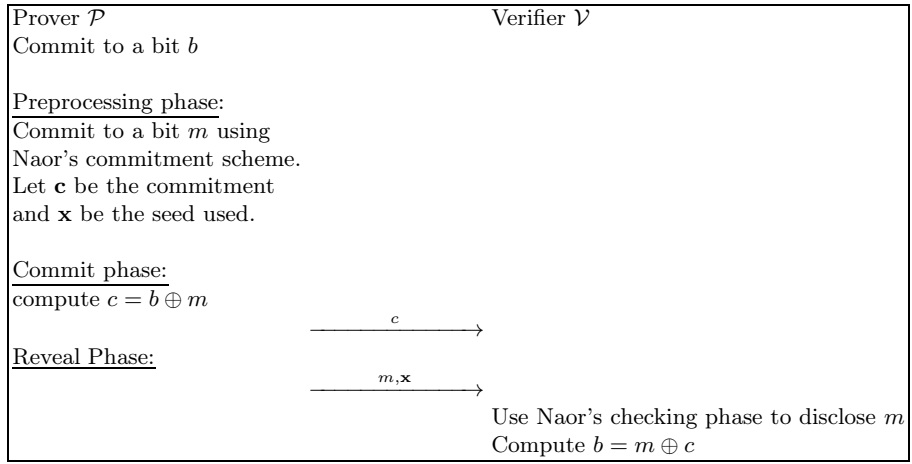

        \centering
        \resizebox{\columnwidth}{!}{
        \begin{tabular}{|l c l|}
            \hline
            Prover $\mathcal{P}$ & & Verifier $\mathcal{V}$ \\
            Commit to a bit $b$ & & \\
            & & \\
             \underline{Preprocessing phase}: & & \\
             Commit to a bit $m$ using & & \\
             Naor's commitment scheme. & & \\
             Let $\mathbf{c}$ be the commitment & & \\
             and $\mathbf{x}$ be the seed used. & & \\
             & & \\
            \underline{Commit phase:} & & \\
            compute $c= b \oplus m$ & & \\
            & $\xrightarrow{\hspace{30pt} c \hspace{30pt}}$ & \\
            \underline{Reveal Phase:} & & \\
            & $\xrightarrow{\hspace{25pt} m, \mathbf{x} \hspace{25pt}}$ & \\
            & & Use Naor's checking phase to disclose $m$ \\
            & & Compute $b = m \oplus c$ \\
        \hline
        \end{tabular}
        }
        \caption{Naor's bit string commitment protocol with preprocessing phase.}
        \label{NaorBCwithPre}
    \end{figure}

The properties of the commitment execution following this ``trick'' are the same as the commitment scheme used in the preprocessing phase:
\begin{itemize}
    \item \textit{Binding}: suppose the prover commits to $b$ by publishing $b \oplus m$ for a committed $m$ in the preprocessing phase, with commitment $c$. To open to $1 \oplus b$, the prover must be able to open $c$ to $1 \oplus m$, so that $(1 \oplus m) \oplus (1 \oplus b) = m \oplus b$. However, if the prover cannot open $c$ to $1 \oplus m$, this is impossible. Therefore, to break the binding property here, the prover would need to break the binding property of the commitment scheme used in the preprocessing phase.
    \item \textit{Hiding}: again, suppose the prover commits to $b$ by publishing $b \oplus m$ for a committed $m$ in the preprocessing phase, with commitment $c$. For the verifier, learning $b$ from $b \oplus m$ is equivalent to learning $m$ from $b \oplus m$. To learn $m$, the verifier would need to break the hiding property of the commitment scheme used in the preprocessing phase.
\end{itemize}
Note that this is a generic proof: it works independently of the commitment scheme considered.

\section{Conclusion}

In conclusion, the optimal choice for the commitment scheme in the quantum oblivious transfer protocol is to commit to each quantum state separately (i.e., 2 bits at a time) with our bit string commitment scheme (\Cref{2BSC}).
This approach offers a significant improvement, considering that most known implementations rely on hash-based string commitments, which are proven secure only under the random oracle model.
By committing to each state individually using our string commitment scheme, we enhance security, as we can prove its security without relying on the random oracle model assumption. Furthermore, with the preprocessing phase technique, we reduce the computational cost of the commitment in the online phase to almost zero, greatly improving efficiency.

Future work may include trying to find an unconditional binding commitment scheme based on OWF with the subvector opening property, as well as trying to improve the efficiency of our new string commitment scheme.

\begin{credits}
\subsubsection{\ackname}
This work has received funding from the European Union’s Horizon Europe research and innovation program under No. 101114043 ("QSNP"), from the DIGITAL-2021-QCI-01 Digital European Program under Project number No 101091642 and the National Foundation for Research, Technology and Development ("QCI-CAT"). This work was supported in part by the Science for Peace and Security (SPS) NATO Programme, through the project
"QSCAN" (reference SPS.MYP.G6158).
\end{credits}

\bibliography{abbrev3,crypto,bib}
\bibliographystyle{IEEEtranS}
\end{document}